\documentclass[journal,draftclsnofoot,onecolumn,12pt]{IEEEtran}
%\IEEEoverridecommandlockouts
% The preceding line is only needed to identify funding in the first footnote. If that is unneeded, please comment it out.
%Template version as of 6/27/2024
\usepackage{cite}
\usepackage{amsmath,amssymb,amsfonts}
%\allowdisplaybreaks[4]
\usepackage{algorithmic}
\usepackage{graphicx}
\usepackage{textcomp}
\usepackage{xcolor}
\usepackage{bm}
\usepackage{mathrsfs}
\usepackage{tikz}
\usepackage{url}
\usepackage{verbatim}
\usepackage{amsthm}
\usepackage{tabularx} 
\usepackage{arydshln}
\usepackage{tikz}
\usepackage{array}
\usepackage{nicematrix}

 \ExplSyntaxOn
\prg_new_conditional:Npnn \int_if_zero:n #1 { p , T , F , TF }
{
	\if_int_compare:w \int_eval:w #1 = \c_zero_int
	\prg_return_true:
	\else:
	\prg_return_false:
	\fi:
}
\ExplSyntaxOff

\newtheorem{lemma}{Lemma}
\newtheorem{theorem}{Theorem}
\newtheorem{remark}{Remark}
 
\newtheorem{proposition}{Proposition}
\newtheorem{assumption}{Assumption}
\theoremstyle{definition}
\newcounter{example}
\newenvironment{example}[1][]{\refstepcounter{example}\par\medskip
	\noindent \textbf{Example~\arabic{example}. #1} \rmfamily}{\medskip}

\def\BibTeX{{\rm B\kern-.05em{\sc i\kern-.025em b}\kern-.08em
    T\kern-.1667em\lower.7ex\hbox{E}\kern-.125emX}}
\begin{document}

\title{A Short Proof of Coding Theorems for Reed-Muller Codes Under a Mild Assumption}

%\IEEEmembership{Member,~IEEE}
\author{Xiao Ma,~\IEEEmembership{Member,~IEEE}
	% <-this % stops a space
	%\thanks{Corresponding author: Xiao Ma.}% <-this % stops a space
	\thanks{The author is with the School of Computer Science and Engineering, and also with  the Guangdong Key Laboratory of Information Security Technology, Sun Yat-sen University, Guangzhou 510006, China (e-mail: maxiao@mail.sysu.edu.cn).}}

% The paper headers
\markboth{Journal of \LaTeX\ Class Files,~Vol.~1, No.~2, March~2025}%
{Shell \MakeLowercase{\rmit{et al.}}: A Sample Article Using IEEEtran.cls for IEEE Journals}

\maketitle

\begin{abstract}
	
In this paper, by treating Reed-Muller~(RM) codes as a special class of low-density parity-check~(LDPC) codes and assuming that sub-blocks of the parity-check matrix are randomly interleaved to each other as Gallager's codes, we present a short proof that RM codes are entropy-achieving as source coding for Bernoulli sources and capacity-achieving as channel coding for binary memoryless symmetric~(BMS) channels, also known as memoryless binary-input output-symmetric~(BIOS) channels, in terms of bit error rate~(BER) under maximum-likelihood~(ML) decoding.
\end{abstract}

\begin{IEEEkeywords}
Coding theorem, error exponents, low-density parity-check~(LDPC) codes, Reed-
Muller~(RM) codes.
\end{IEEEkeywords}

\section{Introduction}
Reed-Muller~(RM) codes are named after Reed and Muller~\cite{Muller1954, Reed1954},  which were introduced in 1954 and are one of the oldest families of codes. Although a sequence of RM codes with a fixed rate has a vanishing relative distance ratio, it was believed, in Shannon's setting~\cite{shannon1948mathematical}, that RM codes are good enough for correcting random errors. 
%As mentioned in~\cite{Abbe2021}, Shu Lin ever gave a talk in 1993, entitled ``RM Codes are Not So Bad"~\cite{lin1993}. 
In 1993, Shu Lin gave a talk, entitled  ``RM Codes are Not So Bad"~\cite{lin1993}, and then gave the transparencies to Forney\footnote{Personal communication with Prof. Shu Lin, Mar., 2025.}. 
In an invited paper~\cite{costello2007channel}, Costello and Forney in 2007 wrote that ``Indeed, with optimum decoding, RM codes may be good enough to reach the Shannon limit on the AWGN channel". Shortly after the polar code breakthrough~\cite{ref1Arikan},  there is a growing conviction that RM codes can achieve capacity on any binary memoryless symmetric~(BMS) channels, given the close relationship between polar codes and RM codes and that the RM codes typically have larger minimum Hamming distance. Arıkan pointed out in a survey~\cite{arikan2010survey} that whether RM codes are capacity-achieving under maximum-likelihood~(ML) decoding is a fundamental open problem in coding theory. 
%Over the years, the strong performance of RM codes in the Shannon setting has been observed and conjectured,  particularly for specific channels such as the binary erasure channel~(BEC) and the binary symmetric channel~(BSC).....
Most recently, Reeves and Pfister~\cite{reeves2023reed} proved that RM codes achieve the capacity of the BMS channels in terms of bit error rate~(BER), while Abbe and Sandon~\cite{Abbe2023} proved that RM codes are also capacity-achieving on symmetric channels in terms of block error rate~(BLER).

In this paper, we show that RM codes can be viewed as a class of low-density parity-check~(LDPC) codes, resembling Gallager's construction~\cite{gallager1962low}. Then, by assuming that different blocks are randomly interleaved to each other, we present a short proof for coding theorems of RM codes, in terms of BER under ML decoding. 

%In this paper, we present a short proof for coding theorems of RM codes. 

%In this paper, we denote by $\log$ the base-$2$ logarithm and by $\exp$ the base-$2$ exponent.

\section{System Model and Main Results}

\subsection{System Model}
\begin{figure}[!t]
	\centering
	\includegraphics[width=3.4in]{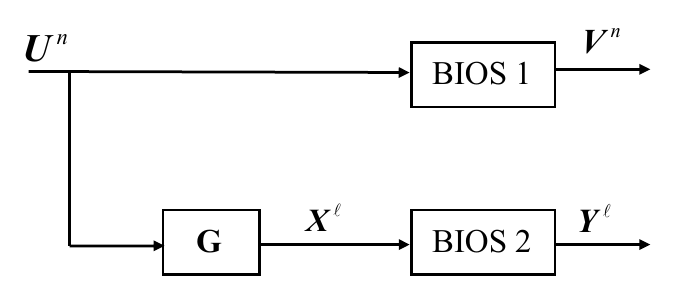}
	\caption{System model.}
	\label{system_model}
\end{figure}

Recently, a new framework to prove coding theorems for linear codes has been proposed~\cite{WYX2021}, as depicted in Fig.~\ref{system_model}, where~$\bm U^{n}\in \mathbb{F}_2^{n}$ is a segment of a Bernoulli process with a success probability of~$\theta = P_U(1) \leq 1/2$ and referred to as the message bits to be transmitted, ${\mathbf{G}}$ is a  binary matrix of size $n\times \ell$ and~$\bm X^{\ell}=\bm U^n{\mathbf{G}}\in \mathbb{F}_{2}^{\ell}$ is referred to as parity-check bits corresponding to $\bm U^n$. The message bits $\bm U^n$ and the parity-check bits $\bm X^{\ell}$ are transmitted through two~(possibly different) binary-input output-symmetric~(BIOS) channels, resulting in~$\bm V^n$ and~$\bm Y^{\ell}$, respectively. A  BIOS channel is characterized by an input~$x \in \mathcal{X}=\mathbb{F}_{2}$, an output set~$\mathcal{Y}$~(discrete or continuous), and a conditional probability mass (or density) function $\{P_{Y|X}(y|x)\big| x\in\mathbb{F}_2, y\in\mathcal{Y}\}$ which satisfies the symmetric condition that $P_{Y|X}(y|1)=P_{Y|X}(\pi(y)|0)$ for some bijective mapping $\pi : \mathcal{Y}\rightarrow\mathcal{Y}$ with $\pi^{-1}(\pi(y))=y$. For simplicity, we assume that the BIOS channels are memoryless, meaning that~$P_{\bm{V}|\bm{U}}(\bm u|\bm v)=\prod\limits_{i=0}^{n-1}P_{V|U}(v_i|u_i)$ and~$P_{\bm{Y}|\bm{X}}(\bm y|\bm x)=\prod\limits_{i=0}^{\ell-1}P_{Y|X}(y_i|x_i)$. This model is also known as a BMS channel. The entropy of a Bernoulli source is defined as $H(U) = -\theta\log_2(\theta) -(1-\theta)\log_2(1-\theta)$, while the channel capacity of a BIOS channel is given by $C = I(X; Y)$ with $I(X; Y)$ being the mutual information and $X$ being a uniform binary random variable with $P_X(0) = P_X(1) = 1/2$.

The task of the receiver is to recover $\bm U^n$  from $(\bm V^n,\bm Y^{\ell})$. This can be done by a two-step decoding algorithm as follows. First, based on $\bm V^n$, the receiver finds a list of size at most $2^{n(H(U|V)+\epsilon)}$ which contains $\bm U^n$ with high probability. Second, based on $\bm Y^{\ell}$, the receiver selects $\hat{\bm U}^n$ from the list as the estimate such that $\hat{\bm X}^{\ell} = \hat{\bm U}^n \mathbf{G}$ is the most likely one. This two-step decoding reveals a new mechanism pertained to systematic linear codes, where the information part and the parity-check part can play different roles. Specifically, the noisy information part helps to limit the list size of the candidate codewords, while the noisy~(or noiseless) parity-check part is used to identify the ML codeword from the list. The BLER is defined as ${\rm Pr}\{\hat{\bm U}^n\neq \bm U^n\}$, while the BER is defined simply as $\mathbb{E}[W_H(\hat{\bm U}^n -\bm U^n)]/ n$ with $W_H(\cdot)$ denoting the Hamming weight. 

We are primarily concerned with the following question: for sufficiently large $n$, how many parity-check bits are required for reliably transmitting $\bm U^n$? Intuitively, the reliable transmission of $\bm U^n$ can be achieved if $nH(U|V ) < \ell I(X; Y )$. The limit of the ratio $\ell/n$ as $n$ goes to infinity is considered in this paper, which represents the average number of parity-check bits required per message bit for reliable transmission. The framework unifies the following problems in coding theory.

\begin{itemize}
	\item If BIOS 1 and BIOS 2 are the same, and $U$ is uniform~(i.e., $\theta = 1/2$), the problem is equivalent to the channel coding problem. From $nH(U|V ) < \ell I(X; Y)$, we have $\ell/n > H(U|V )/I(X; Y)$, which is equivalent to $n/(n + \ell) < C$ and $n/(n + \ell)$ is the code rate for channel coding. This can be verified by noting that $C = 1 - H(X|Y)$ and $H(U|V ) = H(X|Y )$.
	\item If BIOS 1 is a totally erased channel~(meaning that $\bm U^n$ is not transmitted) and BIOS 2 is noiseless, the problem is equivalent to the source coding problem. From $nH(U|V ) < \ell I(X; Y )$, we have $\ell/n > H(U)$ and $\ell/n$ is the code rate for the source coding. This can be verified by noting that $H(U|V ) = H(U)$ and $I(X; Y ) = 1$.
	\item If BIOS 1 is a totally erased channel, and BIOS 2 has noise, the problem can be viewed as the joint source channel coding~(JSCC) problem for Bernoulli sources. From $nH(U|V ) <
	\ell I(X; Y )$, we have $\ell/n > H(U)/I(X; Y )$, where $\ell/n$ is the transmission ratio and the
	limit of the minimum transmission ratio~(LMTR) is $H(U)/C$~\cite{Csiszar2011}.
	\item If the source is uniform~(i.e., $\theta = 1/2$) and BIOS 1 is a noisy channel while BIOS 2 is a noiseless channel, the problem	is equivalent to transmit the uniform sources with the parity-check codes specified by the parity-check matrix $\mathbf{H} = \mathbf{G}^{T}$. From $nH(U|V ) < \ell I(X; Y )$, we have $\ell/n > H(U|V )$ for $I(X; Y ) = 1$. Equivalently, we have $(n - \ell)/n < C$, where $(n - \ell)/n$ is the design code rate of the parity-check code.
\end{itemize}

\subsection{Reed-Muller Codes}
A Reed-Muller code of length $n=2^m$ and order $r$, denoted by RM$(r,m)$, has dimension $\ell=\sum_{i=0}^{r} \binom{m}{i}$ and minimum distance $d_{\text{min}} = 2^{m-r}$. A codeword can be specified by evaluating a Boolean function $f(\bm v) = f(v_1,v_2,\cdots, v_m)$ of degree not greater than $r$ with $\bm v = (v_1,v_2,\cdots, v_m)$ ranging over $\mathbb{F}_2^m$. The RM$(r,m)$ code can be generated by all monomials of degree not greater than $r$.  We have the following proposition. 
\begin{proposition}\label{mrlimit}
	Let $R\in(0,1)$ be fixed. For a positive integer $m$, let $n = 2^m$ and $r$ be the minimum integer such that $\sum_{i=0}^{r} \binom{m}{i} \geq nR$. Then $\lim\limits_{m\rightarrow \infty}(m-r)=\infty$ and hence $\lim\limits_{m\rightarrow \infty}d_{\textrm{min}}=\infty$.
\end{proposition}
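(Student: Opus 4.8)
The plan is to show that the minimizing order $r$ cannot exceed $m/2$ by more than $O(\sqrt{m})$, so that $m-r$ grows essentially linearly in $m$. The starting observation is that the dimension $\sum_{i=0}^{r}\binom{m}{i}$ equals $2^m$ times the cumulative distribution function of a binomial random variable: writing $S_m$ for a sum of $m$ independent $\mathrm{Bernoulli}(1/2)$ variables, we have $\sum_{i=0}^{r}\binom{m}{i} = 2^m\,\mathrm{Pr}\{S_m \le r\}$. Hence the defining inequality $\sum_{i=0}^{r}\binom{m}{i}\ge nR = 2^m R$ is equivalent to $\mathrm{Pr}\{S_m \le r\}\ge R$, and $r$ is simply the smallest integer at which this cumulative probability reaches the level $R$.

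Because the binomial mass concentrates around its mean $m/2$ with fluctuations of order $\sqrt{m}$, I expect $r = m/2 + O(\sqrt{m})$. To make this precise with an explicit constant, I would invoke Hoeffding's inequality for the right tail, which gives $\mathrm{Pr}\{S_m \ge m/2 + c\sqrt{m}\}\le e^{-2c^2}$ for every $c>0$, and therefore $\mathrm{Pr}\{S_m \le m/2 + c\sqrt{m}\}\ge 1 - e^{-2c^2}$. Since $R<1$ is fixed, I can choose a constant $c = c(R) = \sqrt{\tfrac{1}{2}\ln\frac{1}{1-R}}$, which makes $1 - e^{-2c^2}\ge R$. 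Taking $r_0 = \lceil m/2 + c\sqrt{m}\rceil$, the displayed bound yields $\mathrm{Pr}\{S_m \le r_0\}\ge R$, so the smallest admissible order satisfies $r \le r_0 \le m/2 + c\sqrt{m} + 1$.

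From this upper bound I would conclude $m - r \ge m/2 - c\sqrt{m} - 1$, whose right-hand side tends to $+\infty$ as $m\to\infty$ because the linear term dominates the $\sqrt{m}$ term for any fixed $c$. This establishes $\lim_{m\to\infty}(m-r)=\infty$, and since $d_{\mathrm{min}} = 2^{m-r}$ is monotone in $m-r$, it follows immediately that $\lim_{m\to\infty}d_{\mathrm{min}}=\infty$.

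I do not anticipate a genuine obstacle here; the argument is a routine concentration estimate. The only points requiring care are the bookkeeping for the integer rounding of $r_0$ and confirming that the chosen constant $c(R)$ remains finite for every $R\in(0,1)$ -- it blows up only in the limit $R\to 1$, which is excluded. One could equally replace Hoeffding by a direct Chernoff bound on the binomial tail or by a central-limit estimate $\mathrm{Pr}\{S_m\le r\}\approx \Phi\big((r-m/2)/(\sqrt{m}/2)\big)$, the latter giving the sharper asymptotic $r = m/2 + \tfrac{\sqrt{m}}{2}\,\Phi^{-1}(R)+o(\sqrt{m})$; but for the qualitative claim the one-sided tail bound is both sufficient and cleaner.
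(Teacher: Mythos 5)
Your proof is correct, but it takes a genuinely different route from the paper. The paper argues by contradiction: if $m-r$ stayed bounded by some $t$ along a subsequence, then minimality of $r$ forces $R > 1 - \tfrac{t\binom{m}{t}}{2^m}$, and the entropy bound $\binom{m}{t}\le 2^{mH(t/m)}$ makes the right-hand side tend to $1$, contradicting $R<1$. You instead read $\sum_{i=0}^{r}\binom{m}{i} = 2^m\,\mathrm{Pr}\{S_m\le r\}$ as a binomial CDF and apply Hoeffding to locate $r$ directly: $r \le m/2 + c(R)\sqrt{m} + 1$ with $c(R)=\sqrt{\tfrac12\ln\tfrac1{1-R}}$ finite for all $R<1$, hence $m-r \ge m/2 - c\sqrt{m} - 1 \to\infty$. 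Your choice of $c$ checks out exactly ($e^{-2c^2}=1-R$ gives precisely the non-strict threshold $\mathrm{Pr}\{S_m\le r_0\}\ge R$ needed by the definition of $r$), and the bookkeeping with the ceiling is harmless. What each approach buys: the paper's contradiction argument is slightly more elementary (one binomial-coefficient estimate, no probabilistic reformulation) but yields only the qualitative conclusion $m-r\to\infty$; your concentration argument is quantitatively much stronger, showing $m-r = m/2 - O(\sqrt{m})$, i.e., that $d_{\min}=2^{m-r}$ grows like $2^{m/2-O(\sqrt{m})}$, and your closing remark about the central-limit refinement $r = m/2 + \tfrac{\sqrt{m}}{2}\Phi^{-1}(R)+o(\sqrt{m})$ correctly identifies the sharp asymptotics. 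Either proof is acceptable for the proposition as stated.
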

\begin{proof}
	This can be proved by contradiction. If this is not the case, there exists a subsequence $(m,r)$ such that $m-r<t$ for some fixed $t$. Then we have 
	\begin{equation}
		\begin{aligned}
			R &>  \frac{\sum_{i=0}^{r-1} \binom{m}{i} }{2^m} = \frac{2^m-\sum_{i=r}^{m} \binom{m}{i} }{2^m} \\
			&> 1-\frac{ t\binom{m}{t} }{2^m} \overset{(*)}{\geq} 1-\frac{ t \cdot 2^{mH(t/m)}}{2^m}  \rightarrow 1 \quad \text{as } m\rightarrow \infty,
		\end{aligned}
	\end{equation}
	where the inequality $(*)$ follows from the fact that $\binom{m}{t} \leq 2^{mH(t/m)}$ with $H(\cdot)$ being the entropy function~\cite[Example 11.1.3]{cover1999elements}. This contradicts the assumption that $R < 1$, completing the proof.
\end{proof}

The RM$(r,m)$ code can also be generated by all  disjunctive normal forms of degree not greater than $r$. Here, a disjunctive normal form of degree $r$ is defined as $f = \prod\limits_{i=1}^{r}f_i$ where $f_i \in \{v_1,v_2,\cdots,v_m\} \cup \{1+v_1,1+v_2,\cdots, 1+v_m\}$ and $\{f_i, 1+f_i\} \neq\{f_j, 1+f_j\} $ for $i \neq j$. We have the following proposition~\cite{macwilliams1977theory}.

%Define $\mathcal{A} = \{v_1,v_2,\cdots,v_r\} \cup \{1+v_1,1+v_2,\cdots, 1+v_r\}\cup\{v_{r+1},\cdots, v_{m}\}$ and 
%\begin{equation}
%	\mathcal{B}=\left\{ \prod\limits_{i=1}^{r}f_i: f_i \in \mathcal{A} \text{ and } f_i\neq f_j \text{ for } i \neq j. \right\}.
%\end{equation}
%Then we have the following lemma.
\begin{proposition}
	The RM$(r,m)$ code can be generated from the disjunctive normal form of degree~(exactly) $r$. 	
	%	Any monomial of degree not greater than $r$ can be represented as a sum of product of $r$ functions of degree one from $\mathcal{A} = \{v_1,v_2,\cdots,v_r\} \cup 1+\{v_1,\cdots, v_r\}\cup\{v_{r+1},\cdots, v_{m}\}$.
\end{proposition}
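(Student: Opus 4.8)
The plan is to show that the $\mathbb{F}_2$-span of all disjunctive normal forms of degree exactly $r$ coincides with RM$(r,m)$, which by definition is spanned by all monomials of degree at most $r$. One inclusion is immediate: expanding any product $\prod_{i=1}^{r} f_i$ with $f_i \in \{v_{j_i}, 1+v_{j_i}\}$ yields a Boolean polynomial all of whose terms have degree at most $r$, so every degree-$r$ disjunctive normal form already lies in RM$(r,m)$. The substance is the reverse inclusion, namely that every monomial of degree $k \le r$ can be recovered as an $\mathbb{F}_2$-linear combination of degree-exactly-$r$ disjunctive normal forms.

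For the reverse inclusion I would exploit the elementary identity $v_j + (1+v_j) = 1$ over $\mathbb{F}_2$ as a padding device. Given a monomial $v_{i_1}\cdots v_{i_k}$ with $k \le r$, I first choose $r-k$ indices $j_1,\ldots,j_{r-k}$ distinct from one another and from $i_1,\ldots,i_k$; this is possible precisely because $r \le m$ guarantees that $m-k \ge r-k$ free variables remain. I then write
\begin{equation}
v_{i_1}\cdots v_{i_k} = v_{i_1}\cdots v_{i_k}\prod_{s=1}^{r-k}\bigl(v_{j_s} + (1+v_{j_s})\bigr),
\end{equation}
since each factor equals $1$. Expanding the product distributes the right-hand side into a sum over all sign choices for the padding variables, and each resulting summand is a product of exactly $r$ literals drawn from $r$ distinct variables, i.e., a disjunctive normal form of degree exactly $r$. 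Hence the monomial lies in the desired span, and since such monomials generate RM$(r,m)$, the reverse inclusion follows.

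Combining the two inclusions yields equality of the two spans, which is the assertion. The only point requiring care is the bookkeeping in the padding step: one must confirm that the chosen padding indices are genuinely distinct and disjoint from the support of the monomial, so that every expanded summand is a legitimate disjunctive normal form in which no variable repeats, and that sufficiently many free variables exist, which is exactly where the hypothesis $r \le m$ enters. I expect this counting to be the main (albeit minor) obstacle; the underlying algebraic identity is entirely routine.
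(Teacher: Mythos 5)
Your proposal is correct and uses essentially the same argument as the paper: the padding identity $v_j + (1+v_j) = 1$ applied to $r-k$ fresh variables outside the monomial's support, followed by expansion into disjunctive normal forms of degree exactly $r$. The only difference is cosmetic --- you spell out the easy inclusion and the counting of available padding variables, which the paper leaves implicit.
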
 

\begin{proof}
	It suffices to prove that any monomial of degree less than $r$ can be represented as a sum of disjunctive normal forms of degree $r$. Let $f$ be a monomial of degree $t<r$. Without loss of generality, we assume that $f$ does not have factors $v_i$, $1\leq i \leq r-t$. Then we write $f = f\cdot \prod\limits_{i=1}^{r-t}(v_i+1+v_i)$, which can be expanded as a sum of disjunctive normal form of degree $r$. %functions in $\mathcal{B}$.
\end{proof}	
%	
%$1+$ is trivial if the monomial has a degree of $r$. Let $f$ have a degree of $r-1$, which does not include, say, $v_1$, we can write 
%\begin{equation}
%	f = (v_1+1+v_1)f = fv_1+f(1+v_1).
%\end{equation}
%This process can be continued until all terms are a product of $r$ functions.
%
%Denote 
%\begin{equation}
%\mathcal{B}=\left\{ \prod\limits_{i=1}^{r}f_i:f_i\neq f_j \text{ and } f_i \in \mathcal{A} \right\}.
%\end{equation}
%We see that the RM$(r,m)$ code can be generated from the functions in $\mathcal{B}$. 

We also know that a disjunctive normal form of degree $r$ corresponds to a codeword of weight $2^{m-r}$, which goes to infinity as $n$ goes to infinity but $\ell/n$ fixes. In this paper, we treat a Boolean function as a codeword in the column form. 

Consider the RM$(r,m)$ code with $\ell = \sum\limits_{i=0}^{r}\binom{m}{i}$. Let $\mathcal{S}_1, \mathcal{S}_2, \cdots, \mathcal{S}_t$ be $t = \binom{m}{r}$ distinct subsets of $\{v_1,v_2,\cdots, v_m\}$ and $|\mathcal{S}_i| = r$, $1\leq i \leq t$. To be specific, we may arrange $\mathcal{S}_i$ in lexicographic order. For $1\leq i \leq t$, let $\mathbf{G}_i$ be a matrix of size $n\times 2^r$ with columns consisting of disjunctive normal forms of degree $r$ defined on $\mathcal{S}_i$.
It can be verified that the columns within a given sub-matrix $\mathbf{G}_i$ are non-overlapped~(meaning that any two columns do not have common non-zero coordinates) and have column weight $2^{m-r}$. Obviously, the columns in $\mathbf{G}_i$ are linearly independent, while the columns across different sub-matrices can be linearly dependent. Also notice that the matrix $\mathbf{G}_i$ with $i > 1$ is an interleaved version of $\mathbf{G}_1$, resembling Gallager's $\left(2^{m-r}, \binom{m}{r}\right)$-regular LDPC codes~\cite{gallager1962low}. This is illustrated by the following examples.

\begin{example} 
	Consider the RM$(1,3)$ code and RM$(2,4)$ code. The RM matrix of size $8\times 6$ constructed based on the RM$(1,3)$ code is presented in~\eqref{RMmatrix31}, and the RM matrix of size $16\times24$ constructed based on the RM$(2,4)$ code is shown in~\eqref{RMmatrix42}.
\end{example}

\begin{equation}\label{RMmatrix31}
	\mathbf{G}_{8\times 6}=\begin{bNiceArray}{ccc:ccc:ccc}
		1 & & 0 & 1 & & 0  & 1 & & 0 \\
		1 & & 0 & 1 & & 0  & 0 & &  1 \\
		1 & & 0 & 0 & & 1  & 1 & &  0 \\
		1 & & 0 & 0 & & 1  & 0 & &  1 \\
		0 & & 1 & 1 & & 0  & 1 & &  0 \\
		0 & & 1 & 1 & & 0  & 0 & &  1 \\
		0 & & 1 & 0 & & 1  & 1 & &  0 \\
		0 & & 1 & 0 & & 1  & 0 & &  1 \\
		\CodeAfter
		\OverBrace[shorten,
		yshift=3pt]{1-1}{2-3}{\{v_1\}}
		\OverBrace[shorten,
		yshift=3pt]{1-4}{2-6}{\{v_2\}}
		\OverBrace[shorten,
		yshift=3pt]{1-7}{2-9}{\{v_3\}}
	\end{bNiceArray}. 
\end{equation}
\begin{equation}\label{RMmatrix42}
	\mathbf{G}_{16 \times 24}=\begin{bNiceArray}{cccc:cccc:cccc:cccc:cccc:cccc}
		1 & 0 & 0 & 0 & 1 & 0 & 0 & 0 &  1 & 0 & 0 & 0 & 1 & 0 & 0 & 0 & 1 & 0 & 0 & 0 & 1 & 0 & 0 & 0\\
		1 & 0 & 0 & 0 & 1 & 0 & 0 & 0 &  0 & 1 & 0 & 0 & 1 & 0 & 0 & 0 & 0 & 1 & 0 & 0 & 0 & 1 & 0 & 0\\
		1 & 0 & 0 & 0 & 0 & 1 & 0 & 0 &  1 & 0 & 0 & 0 & 0 & 1 & 0 & 0 & 1 & 0 & 0 & 0 & 0 & 0 & 1 & 0\\
		1 & 0 & 0 & 0 & 0 & 1 & 0 & 0 &  0 & 1 & 0 & 0 & 0 & 1 & 0 & 0 & 0 & 1 & 0 & 0 & 0 & 0 & 0 & 1\\
		0 & 1 & 0 & 0 & 1 & 0 & 0 & 0 &  1 & 0 & 0 & 0 & 0 & 0 & 1 & 0 & 0 & 0 & 1 & 0 & 1 & 0 & 0 & 0\\
		0 & 1 & 0 & 0 & 1 & 0 & 0 & 0 &  0 & 1 & 0 & 0 & 0 & 0 & 1 & 0 & 0 & 0 & 0 & 1 & 0 & 1 & 0 & 0\\
		0 & 1 & 0 & 0 & 0 & 1 & 0 & 0 &  1 & 0 & 0 & 0 & 0 & 0 & 0 & 1 & 0 & 0 & 1 & 0 & 0 & 0 & 1 & 0\\
		0 & 1 & 0 & 0 & 0 & 1 & 0 & 0 &  0 & 1 & 0 & 0 & 0 & 0 & 0 & 1 & 0 & 0 & 0 & 1 & 0 & 0 & 0 & 1\\
		0 & 0 & 1 & 0 & 0 & 0 & 1 & 0 &  0 & 0 & 1 & 0 & 1 & 0 & 0 & 0 & 1 & 0 & 0 & 0 & 1 & 0 & 0 & 0\\
		0 & 0 & 1 & 0 & 0 & 0 & 1 & 0 &  0 & 0 & 0 & 1 & 1 & 0 & 0 & 0 & 0 & 1 & 0 & 0 & 0 & 1 & 0 & 0\\
		0 & 0 & 1 & 0 & 0 & 0 & 0 & 1 &  0 & 0 & 1 & 0 & 0 & 1 & 0 & 0 & 1 & 0 & 0 & 0 & 0 & 0 & 1 & 0\\
		0 & 0 & 1 & 0 & 0 & 0 & 0 & 1 &  0 & 0 & 0 & 1 & 0 & 1 & 0 & 0 & 0 & 1 & 0 & 0 & 0 & 0 & 0 & 1\\
		0 & 0 & 0 & 1 & 0 & 0 & 1 & 0 &  0 & 0 & 1 & 0 & 0 & 0 & 1 & 0 & 0 & 0 & 1 & 0 & 1 & 0 & 0 & 0\\
		0 & 0 & 0 & 1 & 0 & 0 & 1 & 0 &  0 & 0 & 0 & 1 & 0 & 0 & 1 & 0 & 0 & 0 & 0 & 1 & 0 & 1 & 0 & 0\\
		0 & 0 & 0 & 1 & 0 & 0 & 0 & 1 &  0 & 0 & 1 & 0 & 0 & 0 & 0 & 1 & 0 & 0 & 1 & 0 & 0 & 0 & 1 & 0 \\
		0 & 0 & 0 & 1 & 0 & 0 & 0 & 1 &  0 & 0 & 0 & 1 & 0 & 0 & 0 & 1 & 0 & 0 & 0 & 1 & 0 & 0 & 0 & 1\\
		\CodeAfter
		\OverBrace[shorten,
		yshift=3pt]{1-1}{2-4}{\{v_1, v_2\}}
		\OverBrace[shorten,
		yshift=3pt]{1-5}{2-8}{\{v_1, v_3\}}
		\OverBrace[shorten,
		yshift=3pt]{1-9}{2-12}{\{v_1, v_4\}}
		\OverBrace[shorten,
		yshift=3pt]{1-13}{2-16}{\{v_2, v_3\}}
		\OverBrace[shorten,
		yshift=3pt]{1-17}{2-20}{\{v_2, v_4\}}
		\OverBrace[shorten,
		yshift=3pt]{1-21}{2-24}{\{v_3, v_4\}}
	\end{bNiceArray}.
\end{equation}

%For convenience, we denote by $\mathbf{\Pi} = [\mathbf{\Pi}_1, \mathbf{\Pi}_2, \cdots, \mathbf{\Pi}_t]$  with $\mathbf{\Pi}_i$ being the permutation matrix of order $n$ such that $\mathbf{G}_i = \mathbf{\Pi}_i\mathbf{G}_1$.  
Now define $\mathscr{C}=\{ \bm u^n \in \mathbb{F}_2^n: \bm u^n \mathbf{G} = \bm 0^{2^r \binom{m}{r}}\}$, which is the RM$(m-r-1,m)$ code with $\mathbf{G}^T$ as the parity-check matrix. Different from the Gallager's LDPC codes, the parity-check matrix $\mathbf{G}^T$ has an increasing row weight $2^{m-r}$ as $m$ increases. Since the columns of $\mathbf{G}$ are linearly dependent, we may randomly remove some redundant columns from $\mathbf{G}$. The resulting matrix is still denoted by 
$\mathbf{G} = \left[\mathbf{G}_1, \mathbf{G}_2, \cdots, \mathbf{G}_t\right]$ with $\mathbf{G}_i$ of size $n\times \ell_i$ and $\sum\limits_{i=1}^t\ell_i = \ell$ such that $\mathbf{G}$ is of full rank. To be specific, we may take $\ell_1 = 2^r$ and $\ell_i \geq \ell_{i-1}$ for $i > 1$.

\subsection{Main Results}
In the following theorems, by an RM matrix $\mathbf{G}$, we mean the matrix $\mathbf{G}$ whose columns are linearly independent codewords selected from the RM$(r, m)$ code. 

%whose column space is the RM$(r, m)$ code.
\begin{theorem}\label{SourceTheorem}
	Consider the BIOS 1 channel as a totally erased channel and the BIOS 2 channel as a noiseless channel. Let $R > H(U)$ be fixed. For any $\varepsilon >0$, one can always find a sufficiently large integer $m_0$ such that, for all $m \geq m_0$, $n = 2^m$ and $r < m$ be the minimum integer with $\sum\limits_{i=0}^{r}\binom{m}{i} \geq n R$, the BER of the linear code specified by an RM matrix $\mathbf{G}$ of size $n \times \ell$ with $\ell = \lceil nR\rceil$ is upper bounded by $\varepsilon$ under maximum likelihood decoding.
\end{theorem}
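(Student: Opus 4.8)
The plan is to specialize the general model to the source-coding regime and then bound the ensemble-averaged BER by a \emph{weighted} union bound over the kernel $\mathscr{C}$, with the weight enumerator of $\mathscr{C}$ controlled through the random-interleaving assumption. For the reduction: since BIOS~1 is totally erased, the side observation $\bm V^n$ is useless and $H(U|V)=H(U)$; since BIOS~2 is noiseless, the receiver sees $\bm Y^\ell=\bm X^\ell=\bm U^n\mathbf{G}$ exactly, so $\ell=\lceil nR\rceil$ with $R>H(U)$ is the source-coding rate. The decoder must invert the syndrome map $\bm u\mapsto\bm u\mathbf{G}$, and because $\theta\le 1/2$ the likelihood $P_U(\bm u)=\theta^{W_H(\bm u)}(1-\theta)^{n-W_H(\bm u)}$ is decreasing in $W_H(\bm u)$; hence maximum-likelihood decoding returns the minimum-Hamming-weight vector in the coset $\{\bm u:\bm u\mathbf{G}=\bm Y^\ell\}$. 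Consequently the error pattern $\hat{\bm U}^n-\bm U^n$ lies in $\mathscr{C}=\{\bm c:\bm c\mathbf{G}=\bm 0\}$, and a decoding error along a particular $\bm c\in\mathscr{C}\setminus\{\bm 0\}$ can occur only if $W_H(\bm U^n+\bm c)\le W_H(\bm U^n)$.

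Next I would bound the BER directly, rather than the BLER. Writing the number of bit errors as $W_H(\hat{\bm U}^n-\bm U^n)=\sum_{\bm c\in\mathscr{C}\setminus\{\bm 0\}}W_H(\bm c)\,\mathbf{1}\{\hat{\bm U}^n-\bm U^n=\bm c\}$ and using the necessary condition above, a union bound gives
\[
n\,\mathrm{BER}\le\mathbb{E}\Big[\sum_{\bm c\in\mathscr{C}\setminus\{\bm 0\}}W_H(\bm c)\,\mathbf{1}\{W_H(\bm U^n+\bm c)\le W_H(\bm U^n)\}\Big],
\]
where the expectation is over the source and, under the stated assumption, over the random interleaving defining $\mathbf{G}$. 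Taking the expectation over the interleaving first replaces the event $\bm c\in\mathscr{C}$ by its probability $p(w)=\Pr\{\bm c\mathbf{G}=\bm 0\}$, which by the symmetry of the ensemble depends only on $w=W_H(\bm c)$. For a source-typical $\bm U^n$ of weight near $n\theta$ the number of shifts $\bm c$ that do not increase the weight is at most $\sum_{j\le n\theta}\binom{n}{j}\le 2^{nH(U)}$, so the bound collapses to a spectrum sum of the form $\sum_{w\ge 1}w\,p(w)\,N(w)$ with $\sum_w N(w)\le 2^{nH(U)}$; the weighting $W_H(\bm c)/n$ is precisely what makes the BER (rather than the BLER) tractable here.

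The crux is the weight-enumerator estimate $p(w)$ for the interleaved ensemble. Under the assumption that the blocks $\mathbf{G}_1,\dots,\mathbf{G}_t$ are random interleavings of one another, $\mathbf{G}$ is Gallager's $\big(2^{m-r},\binom{m}{r}\big)$-regular ensemble: each block contributes a single one per row, so the constraint $\bm c\mathbf{G}_i=\bm 0$ forces $\bm c$ to have even parity on each interleaved coordinate-class, and over independent interleavings these per-block constraints decouple and yield an explicit bound on $p(w)$. By Proposition~\ref{mrlimit} the check weight $2^{m-r}\to\infty$, and one checks that $r\to\infty$ as well (otherwise the dimension $\sum_{i\le r}\binom{m}{i}$ would be merely polynomial in $m$), so the number of checks per bit $\binom{m}{r}$ also diverges and the minimum distance $2^{r+1}$ of $\mathscr{C}$ diverges; these growing parameters force the ensemble weight enumerator to stay close to the random-coding value $\binom{n}{w}2^{-\ell}$ at low and moderate weights, i.e. $p(w)$ is essentially $2^{-\ell}$ across the relevant range. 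Substituting this into the spectrum sum gives roughly $n\,2^{n(H(U)-R)}\to 0$, which is below $\varepsilon$ for all large $m$ since $R>H(U)$.

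The main obstacle I anticipate is exactly this uniform control of $p(w)$ across the whole range of $w$ that a typical source sequence can reach, rather than at a single weight: I must show that the structured, non-overlapping columns within each $\mathbf{G}_i$ do not inflate the number of low- and moderate-weight codewords of $\mathscr{C}$ beyond the random-coding count, and I must handle the minimum-weight decoder's tie-breaking so that the ``$\le$'' in the error condition does not leak extra contributions. Once the spectrum estimate is in hand, the reduction, the union bound, and the comparison of exponents with $R>H(U)$ are routine.
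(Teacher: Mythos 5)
Your overall strategy is sound and, at its core, rests on the same two pillars as the paper's proof: (i) only error patterns of weight at least $n\epsilon$ need to be charged at full price when bounding BER rather than BLER, and (ii) for $n\epsilon \leq w \leq n(1-\epsilon)$ the per-block probability $\Pr\{\bm c\,\mathbf{\Pi}\mathbf{G}_i = \bm 0\}$ is at most $(\tfrac12+\delta)^{\ell_i}$ (a balls-into-bins/parity computation), which under the block-independence assumption multiplies out to $p(w)\leq(\tfrac12+\delta)^{\ell}$. Where you genuinely diverge is the outer machinery: the paper wraps these ingredients in a Gallager-type error exponent $E(\gamma,n,\ell)$ and argues via $E(0,n,\ell)=0$ together with $\partial E/\partial\gamma|_{\gamma=0}\geq \ell/n-H(U)>0$, whereas you use a direct typical-set weighted union bound, $n\cdot\mathrm{BER}\lesssim n\epsilon + n\cdot 2^{nH(U)}\max_{n\epsilon\le w\le n(1-\epsilon)}p(w)$. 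Your route is more elementary and makes the mechanism transparent (list size $2^{nH(U)}$ versus kernel probability $2^{-\ell}$); the paper's route is the one that generalizes smoothly to Theorem~2 and to genuine exponential BLER bounds via $\gamma>0$. Both leave the key probability estimate at the same level of rigor (asserted via the recursive bin-splitting argument plus Assumption~1).

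One concrete caution: your displayed union bound relaxes $\mathbf{1}\{\hat{\bm U}^n-\bm U^n=\bm c\}$ to $\mathbf{1}\{W_H(\bm U^n+\bm c)\le W_H(\bm U^n)\}$ for \emph{every} $\bm c\in\mathscr{C}$, and your subsequent claim that $p(w)$ stays near the random-coding value $2^{-\ell}$ ``at low and moderate weights'' is false at sublinear weights: for $w\ll 2^{r}$ the all-bins-even probability per block behaves like $e^{-\Theta(w)}$, vastly exceeding $2^{-\ell_i}$, so the spectrum sum over $w<n\epsilon$ cannot be controlled this way. The fix is to split \emph{before} relaxing: since exactly one error pattern occurs, the terms with $W_H(\bm c)<n\epsilon$ contribute at most $n\epsilon$ to $\mathbb{E}[W_H(\hat{\bm U}^n-\bm U^n)]$ deterministically, and only the linear-weight terms enter the union bound --- which is precisely the paper's remark that only $\bm w^n$ with $W_H(\bm w^n-\bm u^n)\geq n\epsilon$ need be included. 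With that repair (and noting that for typical $\bm U^n$ and $\theta<1/2$ no bad shift has weight above $n(1-\epsilon)$), your argument goes through under the same Assumption~1 the paper invokes.
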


\begin{theorem}\label{RMChannelTheorem}
	Consider the BIOS 1 channel as a noisy channel and the BIOS 2 channel as a noiseless channel. Let $R > H(U|V)$ be fixed. For any $\varepsilon >0$, one can always find a sufficiently large integer $m_0$ such that, for all $m \geq m_0$, $n = 2^m$  and $r < m$ be the minimum integer with $ \sum\limits_{i=0}^{r}\binom{m}{i}   \geq nR $, the BER of the linear code specified by the parity-check (RM) matrix $\mathbf{H} = \mathbf{G}^T$ of size $\ell\times n$ with $\ell = \lceil nR \rceil$ is upper bounded by $\varepsilon$ under maximum likelihood decoding.
\end{theorem}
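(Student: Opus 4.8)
The plan is to treat Theorem~\ref{RMChannelTheorem} as the channel-coding counterpart of Theorem~\ref{SourceTheorem}: the code-side analysis is identical, and only the list produced in the first decoding step changes, from the source-typical set to the set of sequences jointly typical with the noisy observation $\bm V^n$. First I would set up the two-step decoder. Because BIOS~2 is noiseless, the receiver sees the syndrome $\bm X^\ell=\bm U^n\mathbf G$ exactly; it forms the conditionally typical list $\mathcal L(\bm V^n)=\{\bm u^n:(\bm u^n,\bm V^n)\text{ jointly typical}\}$ and outputs the member whose syndrome matches $\bm X^\ell$. By the AEP, $\Pr\{\bm U^n\notin\mathcal L\}\to0$ and $\mathbb E|\mathcal L|\le 2^{n(H(U|V)+\epsilon)}$. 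A decoding error then forces a collision: some $\bm u^n=\bm U^n+\bm c$ with $\bm 0\neq\bm c\in\mathscr C$ lies in $\mathcal L$, where $\mathscr C=\{\bm u:\bm u\mathbf G=\bm 0\}$ is precisely the code whose parity-check matrix is $\mathbf H=\mathbf G^T$.

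Next I would average the number of colliding codewords over the channel and over the randomly interleaved ensemble. Using independence of the interleavers and the channel,
\begin{equation}
\mathbb E[N_{\mathrm{bad}}]=\sum_{\bm c\neq\bm 0}\Pr\nolimits_{\mathrm{ens}}\{\bm c\in\mathscr C\}\;\Pr\nolimits_{\mathrm{ch}}\{\bm U^n+\bm c\in\mathcal L\}.
\end{equation}
The channel factor depends on $\bm c$ only through its Hamming weight, and the key combinatorial identity is that summing it over all $\bm c\in\mathbb F_2^n$ returns the expected list size, $\sum_{\bm c}\Pr_{\mathrm{ch}}\{\bm U^n+\bm c\in\mathcal L\}=\mathbb E|\mathcal L|\le 2^{n(H(U|V)+\epsilon)}$, because $\bm U^n+\bm c$ ranges over all of $\mathbb F_2^n$ as $\bm c$ does. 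Hence everything reduces to controlling the ensemble factor. Granting the uniform bound $\Pr_{\mathrm{ens}}\{\bm c\in\mathscr C\}\le 2^{-n(R-\epsilon)}$ over all $\bm c\neq\bm 0$, one obtains $\mathbb E[N_{\mathrm{bad}}]\le 2^{-n(R-\epsilon)}\,\mathbb E|\mathcal L|\le 2^{-n(R-H(U|V)-2\epsilon)}$, which vanishes for small $\epsilon$ whenever $R>H(U|V)$. Since the BER is bounded by $\tfrac1n\mathbb E\big[\sum_{\bm c}W_H(\bm c)\,\mathbb 1\{\bm U^n+\bm c\in\mathcal L\}\big]+\Pr\{\bm U^n\notin\mathcal L\}\le \mathbb E[N_{\mathrm{bad}}]+\Pr\{\bm U^n\notin\mathcal L\}$, the ensemble-average BER falls below $\varepsilon$; note that the weight factor $W_H(\bm c)/n\le1$ only helps, additionally suppressing the low-weight collisions, which is why the clean statement is phrased for the BER.

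I expect the uniform ensemble bound to be the main obstacle. For a structured LDPC-like code the probability that a fixed low- or moderate-weight vector is a codeword can exceed the random-code value $2^{-\ell}\approx 2^{-nR}$, and this is exactly the discrepancy that keeps finite-degree LDPC ensembles away from capacity. Here it is removed by Proposition~\ref{mrlimit}: the check degree equals the column weight $2^{m-r}$ of $\mathbf G$, which tends to infinity, so Gallager's analysis of the $\big(2^{m-r},\binom{m}{r}\big)$-regular ensemble \cite{gallager1962low} drives the per-block even-weight probabilities to their random-code limit and forces the ensemble weight spectrum $\tfrac1n\log_2\bar A_{\delta n}$ to the random-coding shape $H(\delta)-R$. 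Making this quantitative after the full-rank column reduction---in particular verifying that $\Pr_{\mathrm{ens}}\{\bm c\in\mathscr C\}\le 2^{-n(R-\epsilon)}$ holds uniformly across all weights rather than only for linear weights $\delta n$---is the single delicate point, and it is exactly where the growing check degree guaranteed by Proposition~\ref{mrlimit} is indispensable. The remainder is the verbatim analog of the proof of Theorem~\ref{SourceTheorem}, with the source-typical list and $H(U)$ replaced by the conditionally typical list and $H(U|V)$.
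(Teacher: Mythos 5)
Your high-level route (a conditionally typical list built from $\bm V^n$ plus a collision count over the permuted ensemble) is a legitimate alternative to the paper's, which instead makes hard decisions on $\bm V^n$, rewrites the problem as syndrome source coding of the resulting error pattern $\bm E^n$ with entropy rate $H(U|V)$, and then reuses verbatim the Gallager-type error exponent and Assumption~1 from the proof of Theorem~\ref{SourceTheorem}. However, your argument has a genuine gap at exactly the step you flag as delicate: the uniform bound $\Pr_{\mathrm{ens}}\{\bm c\in\mathscr C\}\le 2^{-n(R-\epsilon)}$ for \emph{all} $\bm c\neq\bm 0$ is not merely unverified, it is false. The code $\mathscr C$ is (up to the column reduction) the RM$(m-r-1,m)$ code, so the all-ones vector is a codeword under every permutation and has ensemble probability $1$; moreover its minimum distance is $2^{r+1}$, which for fixed $R\in(0,1)$ grows only like $n^{1/2+o(1)}$, so for a vector $\bm c$ of weight $w=2^{r+1}$ one has $\Pr_{\mathrm{ens}}\{\bm c\in\mathscr C\}=A_w/\binom{n}{w}\ge\binom{n}{w}^{-1}=2^{-O(\sqrt n\log n)}$, exponentially larger than $2^{-n(R-\epsilon)}$. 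The weight factor $W_H(\bm c)/n$ cannot absorb this: it buys only a factor $w/n$, while the number of minimum-weight codewords of RM$(m-r-1,m)$ is super-polynomial in $n$, and for a pure typicality list $\bm U^n+\bm c$ with $W_H(\bm c)=o(n)$ remains in $\mathcal L(\bm V^n)$ with probability close to $1$, so $\frac{w}{n}\,\mathbb E[N_w]$ is not controlled by your estimates.

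The repair is precisely the device the paper builds in from the start: because the target is BER, difference vectors of weight below $n\epsilon$ are excluded from the union bound outright (whatever single estimate the decoder emits, they contribute at most $\epsilon$ to the BER); weights in $(0,2^{r+1})\cup(n-2^{r+1},n)$ and all odd weights are impossible; the all-ones vector and the remaining high weights are suppressed by the likelihood factor $(P(\bm w^n)/P(\bm u^n))^{1/(1+\gamma)}$ (in your language, by $\Pr_{\mathrm{ch}}\{\bm U^n+\bm c\in\mathcal L\}$ being exponentially small); and only in the window $n\epsilon\le W_H(\bm c)\le n(1-\epsilon)$ does one invoke Lemma~1 together with Assumption~1 to get $\mathbb E[\mathbb I(\bm c\,\mathbf\Pi\mathbf G=\bm 0)]\le(1/2+\delta)^{\ell}$. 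If you restrict your collision count to that weight window and dispose of the two tails separately, your argument goes through and becomes essentially equivalent to the paper's; as written, the single inequality it rests on fails.
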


\begin{remark}
	Theorems~\ref{SourceTheorem} and~\ref{RMChannelTheorem} definitely hold if we choose $\ell = \sum\limits_{i=0}^{r}\binom{m}{i}$ instead of $\ell = \lceil n R \rceil$, in which case the codes are exactly the same as the RM codes but defined by the parity-check matrices. Notice that the channel codes defined by the parity-check matrices in Theorem~2 have a code rate $(n-\lceil nR\rceil)/n$, which achieves the capacity $C = 1-H(U|V)$. Also notice that the parity-check code defined by the RM matrix $\mathbf{G}$ of size $n\times \ell$ associated with the RM$(r,m)$ code is the RM$(m-r-1, m)$ code~\cite[Theorem 4, Ch. 13]{macwilliams1977theory}. 
\end{remark}

\section{Proof of Coding Theorems}
\subsection{Proof of Theorem 1}

For a fixed positive real number $R < 1$, define for each positive integer $m$ that $n = 2^m$ and $\ell = \lceil nR\rceil$ as the minimum integer not less than $nR$. 
Let $\mathbf{G}_{n\times \ell}$ be a full-rank binary RM matrix of size $n \times \ell$.

%Let $\mathbf{G}_{n\times \ell}$ be the binary RM matrix of size $n \times \ell$ in the form $\mathbf{G} = [\mathbf{G}_1, \mathbf{G}_2, \cdots, \mathbf{G}_t]$ with $t = \binom{m}{r}$.

Let $\bm u^n \in \mathbb{F}_2^n$ be a source sequence, and $\bm x^\ell = \bm u^n \mathbf{G}$ is the parity-check vector, also known as syndrome. Upon receiving $\bm x^\ell$, the ML decoding finds $\hat{\bm u}^n$ with $\hat{\bm u}^n \mathbf{G} = \bm x^\ell$ such that $P\left(\hat{\bm u}^n\right) > P\left(\bm w^n\right)$ for all $\bm w^n \neq \hat{\bm u}^n$ and $\bm w^n \mathbf{G} = \bm x^\ell$. If no such solution exists~(in the case of ties), the ML decoding simply reports an error. To analyze the probability of decoding error, we consider an RM code ensemble by replacing the RM matrix $\mathbf{G}$ by $\mathbf{\Pi}\mathbf{G}$, where $\mathbf{\Pi}$ is drawn uniformly at random from all permutation matrices of order $n$. Defining $$
E(\bm u^n) =\left\{\bm w^n| \bm w^n \neq \bm u^n, p(\bm w^n) \geq p(\bm u^n), (\bm w^n-\bm u^n)\mathbf{G} = \bm 0\right\},$$ a decoding error occurs if $E(\bm u^n)$ is non-empty, i.e., $|E(\bm u^n)| \geq 1$. The BLER, denoted by $\textrm{Pr}\{E\}$, keeps unchanged after introducing $\mathbf{\Pi}$ and is then upper bounded by, for any $0\leq \gamma \leq 1$,
\begin{align}
	\textrm{Pr}\{E\} & = \sum_{\bm u^n \in \mathbb{F}_2^n} P(\bm u^n) \cdot \mathbb{E}\left[\mathbb{I}(|E(\bm u^n)| \geq 1)\right] \nonumber\\
	&\leq \sum_{\bm u^n \in \mathbb{F}_2^n} P(\bm u^n) \bigg( \sum_{\bm w^n\in \mathbb{F}_2^n} \left(\frac{P(\bm w^n)}{P(\bm u^n)}\right)^{1/(1+\gamma)} \nonumber\\
	&\quad \cdot \mathbb{E}\left[\mathbb{I}\left((\bm w^n-\bm u^n)\mathbf{\Pi}\mathbf{G}=\bm 0  \right)\right]\bigg)^\gamma,
\end{align}
where we use $\mathbb{I}(\cdot)$ to represent the indicator function. This bounding technique is similar to that used by~\cite[Chapter 5.6 and Exercise 5.16]{Gallager1968}, where the indicator function is introduced to limit the number of terms in the inner summation. Also notice that, to prove the coding theorems in terms of BER, only those $\bm w^n$ with $W_H(\bm w^n-\bm u^n) \geq n\epsilon$ need to be included in the inner summation. This simplifies much the argument. 

Defining the error exponent
\begin{align}
	E(\gamma, n, \ell)\! &= \!-\frac{1}{n}\log \bigg\{\sum_{\bm u^n} P(\bm u^n)^{1/(1+\gamma)}\! \bigg(\!\sum_{\bm w^n} \!\!P(\bm w^n)^{1/(1+\gamma)} \nonumber\\
	&\quad \cdot\mathbb{E}\left[\mathbb{I}\left((\bm w^n-\bm u^n)\mathbf{\Pi}\mathbf{G}=\bm 0\right)\right] \bigg)^\gamma\bigg\},
\end{align}
where $\log$ and $\exp$ are both taking base $2$,
we have 
\begin{align}
	\textrm{Pr}\{E\} \leq \exp\{-nE(\gamma, n, 
	\ell)\}.
\end{align}
%E(gamma, n, ell) = -frac{1}{n}\log \sum P(u^n)^1/(gamma+1) \sum ...
% to represent the indicator function. 

Noticing that $E(0, n, \ell) = 0$ and
\begin{align}
	\frac{\partial 	E(\gamma, n, \ell)}{\partial \gamma} \bigg|_{\gamma = 0}\!\!\!\!\!&= \!-H(U) \!- \!\frac{1}{n}\!\sum\limits_{\bm u^n}\! P(\bm u^n) \log\!\bigg\{\sum_{\bm w^n} P(\bm w^n) \nonumber \\
	&\quad \cdot \mathbb{E}\left[\mathbb{I}\left((\bm w^n-\bm u^n)\mathbf{\Pi}\mathbf{G}=\bm 0\right)\right] \bigg\},
\end{align}
we see that the key is to prove that $\frac{\partial 	E(\gamma, n, \ell)}{\partial \gamma} \big|_{\gamma = 0} > 0$. The difficulty lies in the fact that $\mathbb{I}\left((\bm w^n-\bm u^n)\mathbf{\Pi}\mathbf{G}=0\right)$ is a product of $t$ binary random vectors $\mathbb{I}\left((\bm w^n-\bm u^n)\mathbf{\Pi}\mathbf{G}_i=\bm 0\right)$ of length $\ell_i$ that may be dependent, where $\mathbf{G}_i$ is the $i$-th sub-matrix of $\mathbf{G}$ as specified in Sec.~II.~B.
%is a product of $\ell$ binary random variables $\mathbb{I}\left((\bm w^n-\bm u^n)\mathbf{\Pi}\mathbf{G}_i=0\right)$  that are typically not independent, where $\mathbf{G}_i$ is the $i$-th column of $\mathbf{G}$.
 Nevertheless, we have the following lemma that ensures,  for any $\bm w^n$ and $\bm u^n$ such that $n\epsilon \leq W_H(\bm w^n-\bm u^n) \leq n(1-\epsilon)$ with a given $0 < \epsilon < 1$, 
\begin{equation}
	\mathbb{E}[\mathbb{I}\left((\bm w^n-\bm u^n)\mathbf{\Pi}\mathbf{G}_i=\bm 0\right)] \leq\left(\frac{1}{2}+\delta\right)^{\ell_i}
\end{equation}
 as $n$ goes to infinity. Then we make the following assumption.
\begin{assumption}
The $t$ random binary vectors $\mathbb{I}\left((\bm w^n-\bm u^n)\mathbf{\Pi}\mathbf{G}_i=\bm 0\right)$, $1\leq i \leq t$, are independent.
\end{assumption}	 
 Under this assumption, which holds if all sub-matrices are randomly interleaved to each other, we have 
\begin{equation}
	\frac{\partial 	E(\gamma, n, \ell)}{\partial \gamma} \big|_{\gamma = 0} \geq \frac{\ell}{n} - H(U)  > 0
\end{equation}
for $R > H(U)$. This completes the proof.

\begin{lemma}
	Let $\bm w^n \in \mathbb{F}_2^n$ be a vector of Hamming weight $w$ and $\delta$ be an arbitrarily small positive number. We have 
	\begin{enumerate}
		\item $w \in \{0,n\}$, $\mathbb{I}\left(\bm w^n\mathbf{\Pi}\mathbf{G}=\bm 0\right) =1$ since both $\bm w^n = \bm 0$ and $\bm w^n = \bm 1$ are codewords in the RM$(m-r-1, m)$ code.
		\item $w \in \{i:1 < i < 2^{r+1}\}\cup\{i:n-2^{r+1} < i < n \}\cup\{i: i \textrm{ mod } 2 = 1\}$, $\mathbb{I}\left(\bm w^n\mathbf{\Pi}\mathbf{G}=\bm 0\right) =0$ since all codewords must have even weights at least $2^{r+1}$. 
		\item $ w \in\{i: n\epsilon \leq i\leq n(1-\epsilon)\}$  with $\epsilon$ being fixed as any arbitrarily small positive real number, $\mathbb{E}[\mathbb{I}\left(\bm w^n\mathbf{\Pi}\mathbf{G}=\bm 0\right)] \leq \left(\frac{1}{2}+\delta\right)^{\ell}$.
	\end{enumerate}
	%$\mathbb{E}[\mathbb{I}\left(\bm w^n\mathbf{\Pi}\mathbf{G}=\bm 0\right)] \leq 2^{-\ell}$ for sufficiently large $n$.
\end{lemma}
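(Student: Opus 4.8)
The plan is to reduce everything to a statement about the code $\mathscr{C}=\mathrm{RM}(m-r-1,m)$, which is the left null space of $\mathbf{G}$, i.e.\ $\mathscr{C}=\{\bm z\in\mathbb{F}_2^n:\bm z\mathbf{G}=\bm 0\}$. Since $\mathbf{\Pi}$ is a uniformly random permutation matrix, $\bm w^n\mathbf{\Pi}$ is uniformly distributed over the $\binom{n}{w}$ vectors of weight $w$, and the event $\bm w^n\mathbf{\Pi}\mathbf{G}=\bm 0$ is exactly $\bm w^n\mathbf{\Pi}\in\mathscr{C}$. Hence in items (1)--(2) the indicator is deterministic in $w$, and in item (3) its expectation equals $A_w/\binom{n}{w}$, where $A_w$ is the number of weight-$w$ codewords of $\mathscr{C}$. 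Items (1) and (2) then follow from elementary facts about $\mathscr{C}$: it contains $\bm 0$ and $\bm 1$; it sits inside the even-weight code $\mathrm{RM}(m-1,m)$, so it has no odd-weight word; its minimum distance is $2^{r+1}$, so it has no nonzero word of weight below $2^{r+1}$; and the complementation $\bm c\mapsto\bm c+\bm 1$ is an automorphism, so applying the previous fact to $n-w$ rules out any word of weight in $(n-2^{r+1},n)$. Each case forces the indicator to be $0$.

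Item (3) is where the work lies. Estimating $A_w/\binom{n}{w}$ head-on is a hard RM weight-enumerator problem, so I would instead use the block decomposition $\mathbf{G}=[\mathbf{G}_1,\dots,\mathbf{G}_t]$. Writing $\mathbb{I}(\bm w^n\mathbf{\Pi}\mathbf{G}=\bm 0)=\prod_{i=1}^{t}\mathbb{I}(\bm w^n\mathbf{\Pi}\mathbf{G}_i=\bm 0)$ and invoking Assumption~1, the expectation factorizes into $\prod_{i}\mathbb{E}[\mathbb{I}(\bm w^n\mathbf{\Pi}\mathbf{G}_i=\bm 0)]$, so it is enough to prove the single-block bound $\mathbb{E}[\mathbb{I}(\bm w^n\mathbf{\Pi}\mathbf{G}_i=\bm 0)]\le(\tfrac12+\delta)^{\ell_i}$ and multiply, using $\sum_i\ell_i=\ell$. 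The decisive structural fact is that the columns of $\mathbf{G}_i$ are indicators of $\ell_i$ of the $2^r$ disjoint subcubes obtained by fixing the $r$ coordinates in $\mathcal{S}_i$; each subcube has size $s=2^{m-r}$, and the block event says precisely that the random weight-$w$ set $\bm z=\bm w^n\mathbf{\Pi}$ meets each of these $\ell_i$ subcubes in an even number of coordinates.

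To bound the single-block probability I would expand over the characters of the partition. Letting $B_1,\dots,B_{\ell_i}$ be the selected subcubes,
\[
\mathbb{E}\!\left[\prod_{j=1}^{\ell_i}\frac{1+(-1)^{\langle\bm z,\bm 1_{B_j}\rangle}}{2}\right]
=\frac{1}{2^{\ell_i}}\sum_{S\subseteq\{1,\dots,\ell_i\}}\frac{K_w(|S|\,s)}{\binom{n}{w}},
\]
where $K_w(a)/\binom{n}{w}=\mathbb{E}_{\bm z}[(-1)^{\langle\bm z,\bm 1_A\rangle}]$ is the normalized Krawtchouk value for a coordinate set $A$ of size $a=|S|\,s$. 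The term $S=\varnothing$ (and, when all $2^r$ subcubes survive, $S=\{1,\dots,2^r\}$, which gives $(-1)^w=1$ for the relevant even $w$; odd $w$ is already covered by item (2)) contributes the main mass $O(2^{-\ell_i})$; every remaining term carries a factor $K_w(a)/\binom{n}{w}$ with $s\le a\le n-s$. Here the diverging subcube size enters: for $w\in[n\epsilon,n(1-\epsilon)]$ I expect the uniform bound $|K_w(a)|/\binom{n}{w}\le\rho(\epsilon)^{\min(a,\,n-a)}$ with $\rho(\epsilon)<1$, so a subset $S$ of size $k$ contributes at most $\rho(\epsilon)^{s\min(k,\,2^r-k)}$. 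Summing against the binomial coefficients gives $\sum_S|K_w(|S|s)|/\binom{n}{w}\le 2\,(1+\rho(\epsilon)^{s})^{\ell_i}$, of order $1+\ell_i\rho(\epsilon)^{s}$. For fixed $R$ the threshold order satisfies $r=\tfrac{m}{2}+O(\sqrt{m})$ (from the central-limit approximation to $\sum_{i\le r}\binom{m}{i}$), so $s=2^{m-r}$ is exponentially large in $m$, whence $2^{r}\rho(\epsilon)^{s}\to0$ and Proposition~\ref{mrlimit} is respected. The single-block probability is then $\le 2^{-\ell_i}(1+o(1))$, and since $\ell_i\to\infty$ this is absorbed into $(\tfrac12+\delta)^{\ell_i}$ for all large $m$, uniformly over admissible $w$.

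The main obstacle is the uniform Krawtchouk estimate $|K_w(a)|/\binom{n}{w}\le\rho(\epsilon)^{\min(a,\,n-a)}$ with $\rho(\epsilon)<1$ for all $w\in[n\epsilon,n(1-\epsilon)]$ — the statement that the parity of the overlap of a random weight-$w$ set with a fixed set of size $a$ is nearly unbiased, with exponential bias control in $\min(a,n-a)$. I would establish it by treating the overlap as a hypergeometric count and evaluating its probability generating function at $z=-1$, comparing with the binomial parity moment $(1-2w/n)^{a}$ while using the anticoncentration supplied by $\min(a,n-a)\ge s\to\infty$. A secondary but genuine subtlety is that $\rho(\epsilon)$ may be close to $1$ when $\epsilon$ is small, so the error sum must retain the sharp exponent $\min(k,2^r-k)$ in $\rho(\epsilon)^{\,s\min(k,2^r-k)}$ rather than a crude union bound; it is precisely this refinement that lets the decay $\rho(\epsilon)^{s}$ from the diverging subcube size dominate the $\binom{2^r}{k}$ counting and keep a single $\delta$ valid when the $t=\binom{m}{r}$ single-block bounds are multiplied.
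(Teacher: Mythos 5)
Your proposal is correct in structure and matches the paper's reduction exactly at the top level: items (1) and (2) from elementary properties of $\mathrm{RM}(m-r-1,m)$ (contains $\bm 0$ and $\bm 1$, even weights, minimum distance $2^{r+1}$, complementation symmetry), and item (3) by proving a single-block bound $\mathbb{E}[\mathbb{I}(\bm w^n\mathbf{\Pi}\mathbf{G}_i=\bm 0)]\le(\tfrac12+\delta)^{\ell_i}$ and multiplying across the $t$ blocks via Assumption~1. Where you genuinely diverge is in the single-block estimate. The paper views the block event as a balls-in-bins problem --- $w$ balls thrown into $2^r$ sub-bins of volume $2^{m-r}$, all counts even --- and bounds it by a recursive binary splitting into $r$ levels, each even--even split costing roughly a factor $\tfrac12$; it then asserts the resulting bound without detailed proof. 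You instead expand the indicator over the characters of the subcube partition, reducing everything to the normalized Krawtchouk (hypergeometric parity) estimate $|K_w(a)|/\binom{n}{w}\le\rho(\epsilon)^{\min(a,n-a)}$ and to the counting $\sum_k\binom{2^r}{k}\rho^{s\min(k,2^r-k)}\to 0$, which you correctly tie to the divergence of the subcube size $s=2^{m-r}$ (indeed $s\approx 2^{m/2}$ for fixed $R$, so $2^r\rho^s\to 0$). Your route buys a single, cleanly isolated analytic lemma in place of the paper's $r$-step recursion, and it makes explicit exactly where the hypothesis $w\in[n\epsilon,n(1-\epsilon)]$ and the divergence of $d_{\min}$ enter; its cost is that the uniform Krawtchouk bound (with its polynomial prefactor, which is absorbable only because $a\ge s\to\infty$) still has to be proved, e.g.\ by the saddle-point evaluation of $[x^w](1-x)^a(1+x)^{n-a}/\binom{n}{w}$ you gesture at. Since the paper likewise leaves its key single-block inequality at the level of ``it can be proved that,'' your proposal is, on the crux, at least as complete as the published argument, just by a different method.
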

\begin{proof}
	To derive the upper bound, we take as an example the event $\mathbb{I}\left(\bm w^n\mathbf{\Pi}\mathbf{G}_1=\bm 0\right)=1$, where $\mathbf{G}_1$ is the first sub-matrix of $\mathbf{G}$. 
	%we consider the first column of $\mathbf{G}_1$, written as $\mathbf{B} =  [\bm 1^{b_1}, \bm 0^{b_2}, \bm 0^{b_3}]^T$ with $b_1 = b_2 = 2^{m-r}$ and $b_3 = n-2b_1$.
	 %we construct a random matrix $\tilde{\mathbf{G}}$ with independent and identically distributed columns such that $\bm w^n\mathbf{\Pi}\mathbf{G} = \bm 0 $ implies $\bm w^n\mathbf{\Pi}\tilde{\mathbf{G}} = \bm 0 $. If so, we have 
%	\begin{align}
%		\mathbb{I}\left(\bm w^n\mathbf{\Pi}\mathbf{G}=\bm 0\right) \leq \mathbb{I}\left(\bm w^n\mathbf{\Pi}\tilde{\mathbf{G}}=\bm 0\right).
%	\end{align}
%	
%	For each $i$, $1\leq i\leq \ell$, we take randomly at uniform a non-zero codeword of Hamming weight not greater than $2^{m-1}$ as the $i$-th column $\tilde{\mathbf{G}}_i$. This is feasible since, if a drawn non-zero codeword has a weight greater than $2^{m-1}$ but less than $n$, we simply use its complementary counterpart since $\bm 1$ is a codeword.  Without loss of generality, we write $\tilde{\mathbf{G}}_1= [\bm 1^{b_1}, \bm 0^{b_2}, \bm 0^{b_3}]^T$ with $b_1 = b_2$ and $b_3= 2^{m}-2b_1$. 	
	We consider a model in which we throw an even number
	$w \geq 2^{r+1}$ balls into $n = 2^m$ bins uniformly at random. These bins are divided into $\ell_1 = 2^r$ sub-bins, each having volume of $2^{m-r}$. Then $\mathbb{I}\left(\bm w^n\mathbf{\Pi}\mathbf{G}_1=\bm 0\right)=1$ is equivalent to that each of these sub-bins has an even number of balls.
	
	These process can be done recursively by $r$ steps. The first step is to throw the $w$ balls into two sub-bins of volume $2^{m-1}$. Then the balls in each sub-bins are further distributed into two sub-bins of volume $2^{m-2}$. This process stops after the $r$-th step.
	
	To satisfy $\mathbb{I}\left(\bm w^n\mathbf{\Pi}\mathbf{G}_1=\bm 0\right)=1$, each step must distribute the balls into two equal sub-bins with two even numbers. It can be proved that, for sufficiently large $n$ and $n\epsilon \leq w \leq n(1-\epsilon)$, we have 
%	Let $w_1$, $w_2$ and $w_3$ be the number of balls contained in the three sub-bins. Since $w_1 + w_2 + w_3 = w$ is even, we have $P(w_1 \textrm{ is even} | w_3 \textrm{ is odd}) = P(w_1 \textrm{ is odd} | w_3 \textrm{ is odd})$ by anti-symmetry and 
%	$
%	|P(w_1 \textrm{ is even}) - P(w_1 \textrm{ is odd})|\leq \delta $
%	 for sufficiently large $n$ and $n\epsilon \leq w \leq n(1-\epsilon)$. Notice that this bound does not always hold for $w = 2^{r+1}$, which, however, contributes to the BER little. The detailed argument is omitted here. By total probability law, we have 
	\begin{align}
		\mathbb{E}\left[\mathbb{I}\left(\bm w^n\mathbf{\Pi}\mathbf{G}_1=\bm 0\right)\right] \leq \left(\frac{1}{2}+\delta\right)^{\ell_1 }.
	\end{align}
	
	%Since the columns of $\mathbf{G}$ are independent and identically distributed and $\tilde{\mathbf{G}}$ has a rank at least $\ell(1-\delta)$ for sufficiently large $n$,
	Therefore, under Assumption 1, we have 
	\begin{align}
			\mathbb{E}\left[\mathbb{I}\left(\bm w^n\mathbf{\Pi}\mathbf{G}=\bm 0\right)\right]\leq \left(\frac{1}{2}+\delta\right)^{\ell}.
	\end{align}
\end{proof}
\subsection{Proof of Theorem 2}
It is well known that compressing a Bernoulli source with success probability $\theta$ can be viewed as protecting the all-zero codeword over a binary symmetric channel~(BSC) with a cross error probability $\theta$.  Now we show that the channel coding for BIOS channels can be transformed into the source coding, completing the proof of Theorem 2.

Suppose that $\bm U^n$ with $\bm U^n  \mathbf{G} = \bm 0$ is transmitted over BIOS1, resulting in $\bm V^n$. Then the log-likelihood ratio~(LLR) vector $\boldsymbol{r}^n$ is calculated as
\begin{equation}\label{eq1}
	r_i = \ln{\frac{P_{V|U}(V_i|U_i=0)}{P_{V|U}(V_i|U_i = 1)}},\ 0 \leq i < n.
\end{equation}
Given the LLR vector $\boldsymbol{r}^n$, the hard-decision vector $\boldsymbol{Z}^n \in \mathbb{F}_2^{n}$ is calculated as
\begin{equation}\label{harddecison}
	Z_i=
	\begin{cases} 
		0, & \mbox{if }r_i > 0\\
		0 \text{ or } 1 \text{ with probability } 1/2, &\mbox{if } r_i = 0\\
		1, & \mbox{if }r_i<0\\
	\end{cases}
\end{equation}
for $0\leq i < n$. 

Then the channel can be transformed into a possibly time-varying BSC channel, 
\begin{equation}
	\bm Z^n = \bm U^n + \bm E^n,
\end{equation} 
where $\bm E^n$ is the error pattern with 
\begin{equation}
	\theta_i = P(E_i = 1) = \frac{1}{1+\exp(|r_i|)}
\end{equation}
for $0\leq i < n$.

Since $\bm Z^n \mathbf{G} = (\bm U^n + \bm E^n)\mathbf{G} = \bm E^n\mathbf{G}$, the decoding problem is equivalent to finding $\bm E^n$ from its compressed version $\bm X^{\ell} = \bm Z^n\mathbf{G}$~(called the syndrome). The slight difference from the source coding is that the error pattern $\bm E^n$ may have time-varying $\theta_i$ depending on $V_i$. This, however, does not preclude we define $H(E) = \lim\limits_{n\rightarrow \infty} \frac{1}{n} \sum\limits_{t=0}^{n-1} H(E_t)=H(U|V)$. Hence, we can derive similarly an error exponent under Assumption 1,
\begin{equation}
\begin{aligned}
		E(\gamma, n, \ell) &=\! -\frac{1}{n}\log\! \bigg\{\sum_{\bm e^n} \!P(\bm e^n)^{1/(1+\gamma)} \!\bigg( \sum_{\bm w^n\in \mathbb{F}_2^n}\! P(\bm w^n)^{1/(1+\gamma)} \\
	&\quad \cdot \mathbb{E}\left[\mathbb{I}\bigg((\bm w^n-\bm e^n)\mathbf{\Pi}\mathbf{G}=\bm 0\bigg)\right] \bigg)^\gamma\bigg\},
\end{aligned}
\end{equation}
and
\begin{equation}
	\textrm{Pr}\{E\} \leq \exp\{-nE(\gamma, n, \ell)\}.
\end{equation}

Therefore, we have $E(0, n, \ell) = 0$ and
\begin{align}
	&\frac{\partial 	E(\gamma, n, \ell)}{\partial \gamma} \bigg|_{\gamma = 0}\nonumber\\
	&= -\!H(U|V)- \frac{1}{n}\sum\limits_{\bm e^n} P(\bm e^n) \nonumber\\
		&\quad \cdot \log\bigg\{\sum_{\bm w^n} P(\bm w^n) 
\mathbb{E}\left[\mathbb{I}\left((\bm w^n-\bm e^n)\mathbf{\Pi}\mathbf{G}=\bm 0\right)\right] \bigg\}\nonumber\\
	&\geq \frac{\ell}{n} - H(U|V)  > 0
\end{align}
for $R > H(U|V)$. This holds for all typical receiving sequences $\bm V^n$ as defined in~\cite{cover1999elements}, completing the proof.

\section{Conclusions}
We have proved the coding theorems for RM codes under an independence assumption, in terms of BER under ML decoding. The proof could be extended to the coding theorems in terms of BLER if new techniques~\cite{wang2024coding,wang2024Systematic} are introduced.

\section*{Acknowledgment}
The author would like to thank Mr. Jifan Liang, Mr. Yixin Wang and Ms. Xiangping Zheng for their helpful discussions, who helped programming to verify some intuitive ideas. This paper was typed by Ms. Xiangping Zheng.

%The author would like to thank Mr. Jifan Liang, Mr. Yixin Wang and Ms. Xiangping Zheng for their helpful discussions. This paper is typed by Ms. Xiangping Zheng. 
\bibliographystyle{IEEEtran}
\bibliography{ref}

\end{document}